\newcommand{\runa}[1]{\textsc{(#1)}}
\newcommand{\cat}[1]{\textbf{#1}}
\newcommand{\loc}{\ensuremath{\ell}}
\newcommand{\Pow}[1]{\ensuremath{\mathbb{P}(#1)}}
\algnewcommand\algorithmicforeach{\textbf{for each}}
\newtheorem{theorem}{Theorem}
\newtheorem{lemma}{Lemma}
\theoremstyle{definition}
\newtheorem{definition}{Definition}
\newtheorem{example}{Example}
\newcommand{\dom}{\textrm{dom}}
\newcommand{\condinf}[3]{\begin{tabular}{l} \inference[]{#1}{#2} \\[5mm]
                          #3 \end{tabular}}
\newcommand{\sqleq}{\ensuremath{\sqsubset\xspace}}
\newcommand{\refc}{\ensuremath{\textrm{ref}\xspace}}
\newcommand{\points}{\ensuremath{\textsf{points}}}
\newcommand{\nexte}{\ensuremath{next}\xspace}
\newcommand{\IP}{\ensuremath{\textsf{IP}}\xspace}
\bfseries\color{black},
\begin{document}
\title{A Type System for Data Flow and Alias Analysis in ReScript}

\author{Nicky Ask Lund \institute{Department of Computer Science\\
    Aalborg University, Denmark}\email{loevendallund@gmail.com} \and Hans Hüttel \institute{Department of Computer Science\\
    University of Copenhagen, Denmark}\email{hans.huttel@di.ku.dk}} 

\def\titlerunning{A Type System for Data Flow and Alias Analysis in ReScript}
\def\authorrunning{Lund and Hüttel}

\maketitle


\begin{abstract}
	ReScript is a strongly typed language that targets
        JavaScript, as an alternative to gradually typed languages,
        such as TypeScript. In this paper, we present a sound type system
        for data-flow analysis for a subset of the ReScript language,
        more specifically for a $\lambda$-calculus with mutability and
        pattern matching. The type system is a local analysis that
        collects information about variables that are used at each program
        point as well as alias information. 
\end{abstract}

\section{Introduction}

The goal of data-flow analysis is to provide a static analysis of the
flow information in a program that can be used in compiler
optimizations and for register allocation. The original approach is to
build a system of flow equations based on a graph representation of
the program and to compute a solution using an iterative algorithm
\cite{KildallGaryA1973Auat,RyderBarbara1988Idaa}. Other graph-free
approaches have also been considered
\cite{mohnen}.  A challenge in this setting is how to
deal with the aliasing that imperative language constructs introduce.

Type systems have often been used to provide static analyses of
programs in order to characterize specific run-time errors, including
ones caused by aliasing. In \cite{smith} Smith et
al. present a notion of alias types that allows functions to specify
the shape of a store and to track the flow of pointers through a
computation. The language is a simple location-based language. Other
type systems are substructural. Ahmed et
al. \cite{DBLP:conf/tlca/MorrisettAF05} use a language based on a
linear $\lambda$-calculus to give an alternative formulation of alias
types. The type system crucially relies on linearity, and every
well-typed program terminates.

In this paper we present a type system for data flow analysis in the
presence of aliasing for a non-trivial fragment of the programming
language ReScript which is meant as an alternative to other typed
languages that target JavaScript. ReScript is based on OCaml with a
JavaScript-inspired syntax and a type system based on that of
OCaml\cite{rescript}. ReScript is imperative and allows for
mutability through reference constructs for creation, reading, and
writing.

The fragment that we study incorporates both functional and imperative
features. We show that the type system is sound in the sense that it
correctly overapproximates the set of occurrences on which any given
occurrence depends. Moreover, the dependency information that the type
system provides can also be used to reason about information flow
properties such as non-interference.  Furthermore, an implementation
for the type system has been made to demonstrate the type system. The
full version of our paper is available at \cite{arxiv}.

\section{A fragment of ReScript}\label{sec:lang}

In this paper we consider a fragment of ReScript that contains a $\lambda$-calculus
with pattern matching, local declarations that can be recursive and a notion of mutable references.
 
\subsection{Syntax}

In a data-flow analysis we must record information of where variables
are used.  Therefore, the language presented is extended with a notion
of \emph{program points} taken from a countably infinite set
$\cat{P}$.
Every subexpression is labelled with a unique program
point. \emph{Occurrences} $o \in \cat{Occ}$ are labelled expressions
$e^p$ where $e \in \cat{Exp}$ and $p \in \cat{P}$. We let $\ell$ range
over a countably infinite, totally ordered set of locations
$\cat{Loc}$ and $x,f$ range over the set of variables $\cat{Var}$.  An
occurrence is \emph{atomic} if it is of the form $u^p$ where
$u \in \cat{Var} \cup \cat{Loc}$. If $u^p$ is atomic, we call any
other occurrence $u^q$ a \emph{$u$-occurrence}.

When given a syntactic category $\mathbf{C}$, we let
$\mathbf{C}_{\cat{P}}$ denote the pair $\mathbf{C}\times\cat{P}$, so
that e.g.  $\cat{Exp}_\cat{P}=\cat{Exp}\times\cat{P}$.  This means
that $\cat{Occ}=\cat{Exp}_{\cat{P}}$.

The formation rules of our abstract syntax are shown below.
\begin{align*}
o &::= e^p \\
e &::= x \mid c \mid o_1\;o_2 \mid \lambda x.o \mid c \; o_1 \; o_2\\
			& \mid \mbox{let} \; f \; o_1 \; o_2 \mid
                   \mbox{let rec} \; f \; o_1 \; o_2 \mid \mbox{case}
                   \; o_1 \; \vec{\pi} \; \vec{o} \mid  \refc \; o \mid o_1 := o_2 \mid \; !o\\
\pi &::= n \mid b \mid x \mid \_  \mid
                                (s_1,\cdots,s_n)
\end{align*}
An \emph{abstraction} $\lambda\;x.o$ has a parameter $x$ and
body $o$.  \emph{Constants} $c$ are either natural numbers $n$,
boolean values $b$, unit value $()$, or functional constants, such as
the arithmetic operations and the Boolean connectives.

An \emph{application} is written $o_1\;o_2$ , and $c\;o_1\;o_2$ denotes a
\emph{functional application} where $c$ is a functional constant and
$o_1$, $o_2$ are its arguments.

\emph{Local declarations} $\mbox{let} \; f \; o_1 \; o_2$ associate
the variable $f$ with the value $o_1$ within $o_2$, and
$\mbox{let rec} \; f \; o_1 \; o_2$ allows us to define a recursive
function $f$ where $f$ may occur in $o_1$. ReScript is an imperative
language due to the presence of the \emph{reference} construct
$\mbox{ref\;o}$ which creates a reference in the form of a location
and allows for binding locations to local declarations. We can read
from a reference $o$ by writing $!o$ and write to a reference using
the \emph{assignment} construct $o_1\;:=\;o_2$.

The pattern matching construct
$\mbox{case} \; o_1 \; \vec{\pi} \; \vec{o}$ matches an occurrence
with the ordered set, $\vec{\pi}$, of patterns.  We denote the size of
the tuple pattern $\pi$ by $|\vec{\pi}|$ and the size of a tuple
occurrence by $|\vec{o}|$, requiring that $|\vec{\pi}| = |\vec{o}|$
such that for each pattern in $\vec{\pi}$ there is a clause in
$\vec{o}$.

The notions of free and bound variables are defined as expected. We
assume that all binding occurrences involve distinct names; this can
be ensured by means of $\alpha$-conversion.

\begin{example}\label{ex:write}
Consider
 \begin{lstlisting}[language=Caml, mathescape=true]
   (let x (ref 484000$\p{1}$)$\p{2}$
     (let y (let z (5$\p{3}$)$\p{4}$
        (x$\p{5}$:=z$\p{7}$)$\p{8}$)$\p{9}$ (!x)$\p{10}$)$\p{11}$)$\p{12}$
\end{lstlisting}
This creates a reference to the constant 3 and binds the reference to
$x$ (so $x$ is an alias of this reference). Next a binding of $z$ is made
to the constant 5 before writing to the reference, that $x$ is bound to, 
to the value that $z$ is bound to. Then a binding for $y$ is made to the unit value,
as the assignment evaluates to the unit value. Lastly the reference, that $x$ is bound to, is read.

\end{example}

\subsection{The binding model}\label{sec:EnvSto}

Our binding model uses an environment $env$ that keeps
track of the bindings of variables to values. Values are given by the formation rules
\[ v ::= c \mid \ell \mid () \mid \langle x,e^{p'},env\rangle \mid \langle
  x,f,e^{p''},env\rangle \]
Constants $c$, locations $\ell$, and unit $()$ are values, as are closures, $\langle
x,e^{p'},env\rangle$ and recursive closures, $\langle x,f,e^{p''},env\rangle$.

An environment $env \in \cat{Env} $ is a partial function $env:
\cat{Var}\rightharpoonup\cat{Values}$ and we let $env^{-1}(v)=\{x\in \dom(env)\mid env(x)=v\}$.
A store $sto \in \cat{Sto}$ is a partial function
$\cat{Sto}=\cat{Loc}\cup\{\nexte\}\rightharpoonup\cat{Values}$ where
$\nexte$ is a pointer to the next unused location -- this information
is needed when new locations are needed.

Moreover, we assume a function
$new:\cat{Loc}\rightarrow\cat{Loc}$, which given, a location, gives us
the next location.

For any function $f$ we let $f[u
\mapsto w]$ denote the function $f'$ such that $f(u') = f(u)$ for $u
\neq u'$ and $f'(u') = w$.

\subsection{Keeping track of dependencies}\label{sec:DepFunc}

The semantics that follows will collect the semantic dependencies in a
computation. An occurrence $u^p$ semantically depends upon a set of
occurrences $S$ if the value of $u^p$ can be found using at most the
values of the occurrences in $S$. To determine semantic dependencies
we use a dependency function that will tell us for each variable and
location occurrence what other, previous occurrences they depend upon.

\begin{definition}[Dependency function]\label{def:DepFunc}
  A dependency function $w$ is a partial functions from atomic
  occurrences to a pair of dependencies:
\[
  w: \cat{Loc}_{\mathbf{P}}\cup\cat{Var}_{\mathbf{P}}\rightharpoonup\Pow{\cat{Loc}_{\mathbf{P}}}\times\Pow{\cat{Var}_{\mathbf{P}}} \] 
\end{definition}
For a dependency function $w$ and a
$u^p\in\cat{Loc}_{\mathbf{P}}\cup\cat{Var}_{\mathbf{P}}$, the clause
\[ w (u^p)=(L,V) \]
tells us that the element $u^p$ is bound to a pair of location and
variable occurrences where $L$ is a set of location occurrences
$L=\{\loc_1^{p_1},\cdots,\loc_n^{p_n}\}$ and a set of variable
occurrences $V =\{x_1^{p'_1},\cdots,x_m^{p'_m}\}$, meaning that the
value of the element $u^p$ depends on the occurrences found in $L$ and
in $V$.

\begin{example}\label{ex:dep}
	Consider the occurrence from \cref{ex:write}, where we can
        infer the following bindings for a dependency function
        $w_{ex}$ over this occurrence: 
	\[
          w_{ex}= \begin{array}{l}
                    [x^2\mapsto(\emptyset,\emptyset),z^4\mapsto(\emptyset,\emptyset),y^9\mapsto(\emptyset,\{x^5\}),
                    \\ \; \loc^2\mapsto(\emptyset,\emptyset),\loc^8\mapsto(\emptyset,\{z^7\})] \end{array} \]
	where $\loc$ is the location created from the reference construct.
	The variable bindings are distinct, as the location $\loc$ is
        bound multiple times, for the program points $2$ and $8$. 

        When we read an occurrence bound in $w_{ex}$, we must also
        know its program point, as there can exists multiple bindings
        for the same variable or location.
\end{example}

By considering \cref{ex:dep}, we would like to read the information
from the location, that $x$ is an alias to.  As it is visible from the
occurrence in \cref{ex:write}, we know that we should read from
$\loc^8$, since we wrote to that reference at the program point $8$.
From $w_{ex}$ alone it is not possible to know which occurrence to
read, since there is no order defined between the bindings.  We therefore
introduce a notion of ordering in the form of a binary relation over
program points.



\begin{definition}[Occurring program points]\label{def:OccPP}
	Let $O$ be a set of occurrences, then $\points(O)$ is given by:
	\[ \points(O)=\{p\in\cat{P}\mid\exists e^p. e^p\in O\} \]
        For a pair $(L,V)$ we let $\points(L,V) = \points(L) \cup \points(V)$.
\end{definition}

Any dependency function induces an ordering on program points as follows.

\begin{definition}\label{def:RelPoint}
	Let $w\in\cat{W}$ be a dependency function.
	Then the induced order $\sqleq_w$ is given by
	 \[ \sqleq_w= \{(p,p')\mid p \in \points(\dom(w)), p' \in
           \points(w(p)) \} \]
         	We say that $w$ is a partial order if its equality
                closure $\sqleq_w$ is a partial order.
\end{definition}

\begin{example}\label{ex:depRel}
	Consider the example from \cref{ex:dep}. Assume a binary
        relation over the dependency function $w_{ex}$ given by
	\[ \sqleq_{w_{ex}}=\{(2,4),(2,9),(5,9),(2,8),(7,8)\} \]
	From this ordering, it is easy to see the ordering of the elements.
	The ordering we present also respects the flow the occurrence from \cref{ex:write} would evaluate to.
	We then know that the dependencies for the reference (that $x$ is an alias to) is for the largest binding of $\loc$.
\end{example}

%
The immediate predecessor $\IP(u,S)$ of an element $u^p$ wrt. a set of
occurrences $S$ is the most recent $u$-element in $S$ seen before $u$. 

\begin{definition}[Immediate predecessor]\label{def:GBind}
  Let $u$ be an element, let \sqleq\ be an ordering on program points
  and $S$ be a set of occurrences, then $\IP(u,S)$ is given by
\[ \IP(u^p,S)=\sup\{u^q\in S\mid q\sqleq p\} \]
\end{definition}

Based on \cref{def:GBind}, we can present an instantiation of the
function for the dependency function $w$ and an order over $w$,
$\sqleq_w$: 

\begin{definition}
  Let $w$ be a dependency function, $\sqleq_w$ be the induced order,
  and $u$ be an element, then $\IP_{\sqleq_w}$ is given by:
	$$\IP_{\sqleq_w}(u,w)=\sup\{u^p\in \dom(w)\mid u^q\in \dom(w).q\sqleq_w p\}$$
\end{definition}

\begin{example}\label{ex:deplookup}
  As a continuation of \cref{ex:depRel}, we can now find the greatest
  element for an element, e.g., a variable or location.  As we were
  interested in finding the greatest bindings a location is bound to
  in $w_ex$, we use the function $\IP_{\sqleq_w}$:
	\[ \IP_{\sqleq_{w_ex}}(\loc,w_ex)=\sup\{\loc^p\in \dom(w)\mid
          \loc^q\in \dom(w). q\sqleq_{w_ex} p\} \]
where the set we get for $\loc$ is 
$\{\loc^2,\loc^8\}$. From this, we find the greatest element: 
	\[ \loc^8=\sup\{\loc^2,\loc^8\} \]
	As we can see, from the $\IP_{w_ex}$ function, we got $\loc^8$ which were the occurrence we wanted.
\end{example}

\subsection{Collecting semantics}\label{sec:sem}

The semantics for our language that collects dependency information is
a big-step semantics with transitions of the form
\begin{align*}
env\vdash\left\langle e^{p'},sto,(w,\sqleq_w),p\right\rangle\rightarrow\left\langle v,sto',(w',\sqleq_w'),(L,V),p''\right\rangle
\end{align*}
This should be read as: Given the store $sto$, a dependency function
$w$, a relation over $w$, and the previous program point $p$, the
occurrence $e^{p'}$ evaluates to a value $v$, an updated store $sto'$,
an updated dependency function $w'$, a relation over $w'$, the
dependency pair $(L,V)$, and the program point $p''$ reached after
evaluating $e^{p'}$, given the bindings in the environment $env$. 

A selection of the rules for $\rightarrow$ can be found in
\cref{fig:ColSem}.

The \runa{Var} rule  uses the environment to get the value $x$ is bound to and uses dependency function $w$ to get its dependencies.
		To lookup the dependencies, the function $\IP_{\sqleq_w}$ is used to get the greatest binding a variable is bound to, in respect to the ordering $\sqleq_w$.
		Since the occurrence of $x$ is used, it is added to the set of variable occurrences we got from the lookup of the dependencies for $x$.

The \runa{Let} rule for the occurrence $[\mbox{let}\;x\;e_1^{p_1}\;e_2^{p_2}]^{p'}$, creates a local binding that can be used in $e_2^{p_2}$.
		The \runa{Let} rule evaluate $e_1^{p_1}$, to get the value $v$, that $x$ will be bound to in the environment for $e_2^{p_2}$, and the dependencies used to evaluate $e_1^{p_1}$ are bound in the dependency function.
		As we reach the program point $p_1$ after evaluating $e_1^{p_1}$, and it is also the program point before evaluating $e_2^{p_2}$, the binding of $x$ in $w$ is to the program points $p_1$.	

The \runa{Ref} rule, for the occurrence $[\refc\;e^{p'}]^{p''}$,
creates a new location and binds it in the store $sto$, to the value
evaluated from $e^{p'}$. We record the dependencies from evaluating
the body $e^{p'}$ in $w$ at the program point $p''$.

\runa{Ref-read} evaluates the body $e^{p_1}$ to a value which
must be a location $\loc$, and reads the value of $\loc$ in the store.
		The \runa{Ref-read} rule looks up the dependencies for
                $\loc$ in $w$.
		As there could be multiple bindings for $\loc$, in $w$, at different program points, we use the $\IP_{\sqleq_{w'}}$ function to get greatest binding of $\loc$ with respect to the ordering $\sqleq_{w'}$, 
		and we also add the location occurrence $\loc^{p'}$ to the set of locations.

                Finally \runa{Ref-write} tells us that we must
                evaluate $e_1^{p_1}$ to a location $\loc$ and
                $e_2^{p_2}$ to a value $v$, and bind $\loc$ in the
                store $sto$ to the value $v$.  We pass the program
                point $p'$ and the dependency function
                is also updated with a new binding for $\loc$.

\begin{table*}
	\begin{tabular}{l}
          \runa{Var}\\[0.2cm]
	\inference[]{}
	{env\vdash \left\langle x^{p'},sto,(w,\sqleq_w),p \right\rangle \rightarrow \left\langle v,sto,(w,\sqleq_w),(L,V\cup\{x^{p'}\}),p' \right\rangle}\\[6mm]
	where $env(x)=v$, $x^{p''}=\IP_{\sqleq_w}(x,w)$, and
          $w(x^{p''})=(L,V)$ \\[4mm]
                    \runa{Let}\\[0.2cm]
	\inference[]
	{
		env\vdash \left\langle e_1^{p_1},sto,(w,\sqleq_w),p \right\rangle \rightarrow \left\langle v_1,sto_1,(w_1,\sqleq_w^1),(L_1,V_1),p_1 \right\rangle &\\
		env[x\mapsto v_1]\vdash \left\langle e_2^{p_2},sto_1,(w_2,\sqleq_w^1),p_1 \right\rangle \rightarrow \left\langle v,sto',(w',\sqleq_w'),(L,V),p_2 \right\rangle
	}
	{env\vdash \left\langle \left[\mbox{let}\;x\;e_1^{p_1}\;e_2^{p_2}\right]^{p'},sto,(w,\sqleq_w),p \right\rangle \rightarrow \left\langle v,sto',(w',\sqleq_w'),(L,V),p' \right\rangle}\\[6mm]
	where $w_2=w_1[x^{p_1}\mapsto(L,V)]$
          \\[4mm]
          \runa{Ref}\\
	\inference[]
	{env \vdash \left\langle e^{p'},sto,(w,\sqleq_w),p \right\rangle \rightarrow \left\langle v,sto',(w',\sqleq_w'),(L,V),p' \right\rangle}
	{env\vdash \left\langle \left[\refc\;e^{p'}\right]^{p''},sto,(w,\sqleq_w),p \right\rangle \rightarrow \left\langle \loc,sto'',(w'',\sqleq_w'),(\emptyset,\emptyset),p'' \right\rangle}\\[6mm]
	where $\loc=next$, $sto''=sto'[next\mapsto new(\loc),\loc\mapsto v]$, and\\
          $w''=w'[\loc^{p'}\mapsto (L,V)]$ \\[4mm]
 \runa{Ref-read}\\
	\inference[]
	{env \vdash \left\langle e^{p_1},sto,(w,\sqleq_w),p \right\rangle \rightarrow \left\langle \loc,sto',(w',\sqleq_w'),(L_1,V_1),p_1 \right\rangle}
	{env\vdash \left\langle \left[!e^{p_1}\right]^{p'},sto,(w,\sqleq_w),p \right\rangle \rightarrow \left\langle v,sto',(w',\sqleq_w'),(L\cup L_1\cup\{\loc^{p''}\},V\cup V_1),p' \right\rangle}\\[6mm]
	where $sto'(\loc)=v$,
          $\loc^{p''}=\IP_{\sqleq_w'}(\loc,w')$, and
          $w'(\loc^{p''})=(L,V)$ \\[4mm]
\runa{Ref-write}\\
	\inference[]
	{
		env \vdash \left\langle e_1^{p_1},sto,(w,\sqleq_w),p \right\rangle \rightarrow \left\langle \loc,sto_1,(w_1,\sqleq_w^1),(L_1,V_1),p_1 \right\rangle &\\
		env \vdash \left\langle e_2^{p_2},sto_1,(w_1,\sqleq_w^1),p_1 \right\rangle \rightarrow \left\langle v,sto_2,(w_2,\sqleq_w^2),(L_2,V_2),p_2 \right\rangle
	}
	{env\vdash \left\langle \left[e_1^{p_1}:=e_2^{p_2}\right]^{p'},sto,(w,\sqleq_w),p \right\rangle \rightarrow \left\langle (),sto',(w',\sqleq_w'),(L_1,V_1),p' \right\rangle}\\[6mm]
	where $sto'=sto_2[\loc\mapsto v]$,
          $\loc^{p'}=inf_{\sqleq_w^2} \loc,w$,\\
	$w'=w_2[\loc^{p'}\mapsto(L_2,V_2)]$, and
          $\sqleq_w'=\sqleq_w^2\cup(p'',p')$ \\[6mm]
	\end{tabular}
	\caption{Selected rules from the semantics}
	\label{fig:ColSem}
\end{table*}

\section{A type system for data-flow analysis}\label{sec:TypeSys}

The type system for data-flow analysis that we now present is an
overapproximation of the big-step semantics.

\subsection{An overview of the type system}

The system assigns types, presented in \cref{sec:types}, to
occurrences given a type environment (presented in
\cref{sec:Judge}) and a so-called basis (presented in
\cref{sec:basis}).

As presented, the language contains local information as bindings and
global information as locations.  Since locations are a semantic
notion, and references do not need to be bound to variables, we use
the notion of \emph{internal} variables to represent locations.
Internal variables are denoted by
$\nu x,\nu y, \nu z \ldots \in\cat{IVar}$.  We use a partition of
$\cat{IVar} \cup \cat{Var}$ to represent aliasing.  Whenever variables
or internal variables belong to the same subset in a partition, the
intention is that they share the same location.

In this paper, we will not introduce polymorphism into the type
system.  For this reason we require that references cannot be bound to
abstractions and that every abstraction is used at most used once.

\subsection{Types}\label{sec:types}

The set of types \cat{Types} is defined by the formation rules
\[ T ::=(\delta,\kappa)\mid T_1 \rightarrow T_2 \]
If an occurrence $o$ has the base type $(\delta,\kappa)$, the set
$\delta$ is the set of occurrences that the value of $o$ can depend
on, and $\kappa$ represents alias information in the form of the set
of variables and internal variables upon which the value of $o$ may
depend.  If $o$ has type $\kappa \neq \emptyset$, the
occurrence must therefore represent a location.

\begin{definition}[Type base for aliasing]
	For an occurrence $o$, let $\cat{Var}_o$ be the set of all
        variables found in $o$ and $\cat{IVar}_o$ be the set of all
        internal variables found in $o$.
	The type base $\kappa^0=\{\kappa^0_1,\cdots,\kappa^0_n\}$ is
        then a partition of $\cat{Var}_o \cup \cat{IVar}_o$, where
        $\kappa_i^0\cap\kappa_j^0=\emptyset$ for all $i\neq j$. 
\end{definition}
For a variable, $x$ to be an alias of an internal variable, $\nu y$,
there must exist a $\kappa^0_i$ where $x \in \kappa^0_i$ and $\nu y
\in \kappa^0_i$. This means that there can only be more than one
variable in a $\kappa^0_i$, if there also exists an internal variable
in $\kappa^0_i$.


The arrow type is introduced to type abstractions.
If either $T_1$ or $T_2$ in an arrow type $T_1 \to T_2$ is a base type where $\kappa \neq \emptyset$,
then the abstraction must either take a reference as input or return a reference.



Since the type system approximates the occurrences used to evaluate an occurrence, we need a notion of combining types.

\begin{definition}
  	Let $T_1$ and $T_2$ be two types, then their union is defined as

  \[
T_1\cup T_2=
    \begin{cases}
      (\delta\cup\delta',\kappa\cup\kappa') & \begin{array}{l}
                                                T_1=(\delta,\kappa) \\
      T_2=(\delta',\kappa') \end{array} \\[3mm]
       (T_1'\cup T_2')\rightarrow (T_1''\cup T_2'') &
       \begin{array}{l} T_1=T_1'\rightarrow T_1'' \\;T_2=T_2'\rightarrow
       T_2''  \end{array}\\[3mm]
       (\delta' \cup \delta, \kappa \cup \kappa') & \begin{array}{l}
                                                      T_1 =
                                                      (\delta',\kappa') \\
                                                      T_2 =
                                                      (\delta,\kappa) \end{array}
                                                    \\[3mm]
       \text{undefined} & \text{otherwise}
    \end{cases}
\]
\end{definition}

\subsection{The binding model of the type system}\label{sec:basis}

The semantics will let us find dependency information, and the type 
system must approximate these semantic notions.





A type environment tells us the types of elements.

\begin{definition}[Type Environment]
  A type environment $\Gamma$ is a partial function $\Gamma:\cat{Var}_{\mathbf{P}}\cup\cat{IVar}_{\mathbf{P}}\rightharpoonup\cat{Types}$
\end{definition}

\begin{definition}[Updating a type environment]
	Let $\Gamma$ be a type environment, let $u^p$ be an element
        and let $T$ be a type.
	We write $\Gamma[u^p:T]$ to denote the type environment $\Gamma'$ where:
	\begin{align*}
		\Gamma'(y^{p'})=
		\left\{\begin{matrix}
			\Gamma(y^{p'}) & \mbox{if}\;y^{p'}\neq u^{p}\\\	 
			T & \mbox{if}\;y^{p'}=u^{p}
		\end{matrix}\right.
	\end{align*}
\end{definition}

We also assume an ordering of program points at type level.

\begin{definition}[Approximated order of program points]
	An approximated order of program points $\Pi$ is a pair
	\[ \Pi=(\cat{P},\sqleq_\Pi) \]
	where
	\begin{itemize}
		\item \cat{P} is the set of program points in an occurrence,
		\item $\sqleq_\Pi\subseteq\cat{P}\times\cat{P}$
	\end{itemize}
	We say that $\Pi$ is a partial order if $\sqleq_\Pi$ is a partial order.
\end{definition}

The notion of the immediate predecessor of a $u\in\cat{IVar} \cup \cat{Var}$ at the type level is relative to a type environment $\Gamma$ and the approximated order $\Pi$.

\begin{definition}[Immediate predecessor at type level]\label{def:GBindPi}
	\[ \IP_{\sqleq_\Pi}(u,\Gamma)=\sup\{u^p\in
          \dom(\Gamma)\mid u^q\in \dom(\Gamma).q\sqleq_\Pi p\} \]
\end{definition}

A lookup of variable $u^p$ in the semantics is straightforward as its
value will be unique. In the type system, however, we need to
approximate over all possible branches in an occurrence.  To this end,
we consider chains wrt. our approximate order. A $p$-chain describes
the history, or a single possible evalution, behind an occurrence $u^p$.
A set of $p$-chains can thus be used to describe what an internal variable depends on.





\begin{definition}[$p$-chains]
  A $p$-chain, denoted as $\Pi_p^{*}$, is a maximal chain
  wrt. $\sqleq_\Pi$ whose maximal element is $p$. We write
  $\Pi_p^{*}\in\Pi$, if $\Pi_p^{*}$ is a $\Pi$-chain. For any $p$, we
  let $\Upsilon_p$ denote the set of all $p$-chains in $\Pi$.
\end{definition}


We can now define the immediate predecesor for the type system wrt. the set of $p$-chains.
This is done by taking the union of all $p$-chains for an occurrence $u^p$.

\begin{definition}\label{def:GBindUps}
	Let $u\in \cat{Var}\cup\cat{IVar}$, be either a variable or internal variable, $\Gamma$ be a type environment, and $\Upsilon_p$ be a set of $p$-chains, then $\IP_{\Upsilon_p}$ is given by:
\[
  \IP_{\Upsilon_p}(u,\Gamma)=\bigcup_{\Pi_p^{*}\in\Upsilon_p}\IP_{\Pi_p^{*}}(u,\Gamma) \]
\end{definition}

\subsection{The type system}\label{sec:Judge}
We will now present the judgement and type rules for the language, that is, how we assign types to occurrences.

Type judgements have the format
\[ \Gamma,\Pi\vdash e^p: T \]
and should be read as: the occurrence $e^p$ has type $T$, given the
dependency bindings $\Gamma$ and the approximated order of program
points $\Pi$. 

A highlight of type rules can be found in \cref{fig:TypeSys}.

\begin{description}

	\item[\runa{T-Var}] rule, for occurrence $x^p$, looks up the type for $x$ in the type environment, by finding the greatest binding using \cref{def:GBindPi}, and adding the occurrence $x^p$ to the type.

	\item[\runa{T-Let-1}] rule, for occurrence $[\mbox{let}\;x\;e_1^{p_1}\;e_2^{p_2}]^p$, creates a local binding for an internal variable, with the type of $e_1^{p_1}$ that can be used in $e_2^{p_2}$.
     As such, the rule assumes that the type of $e_1^{p_1}$ is a base type with alias information, i.e., $\kappa\neq\emptyset$.
		The other cases, when $e_1^{p_1}$ is not a base type with alias information, are handled by the \runa{T-Let-2} rule.

	\item[\runa{T-Case}] rule, for occurrence $[\mbox{case}\;e^{p}\;\vec{\pi}\;\vec{o}]^{p'}$, is an over-approximation of all cases in the pattern matching expression, by taking an union of the type of each case.
		Since the type of $e^p$ is used to evaluate the pattern matching, we also add this type to the type of the pattern matching.

	\item[\runa{T-Ref-read}] rule, for occurrence $[!e^{p}]^{p'}$, is used to retrieve the type of references, where $e^p$ must be a base type with alias information.
     Since the language contains pattern matching, there can be multiple internal variables in $\kappa$ and multiple occurrences to read from.
     To do the lookup, we use $\IP_{\Upsilon_{p'}}$ to look up all the $p'$-chains.
\end{description}





\begin{table*}[h]
	\setlength\tabcolsep{8pt}
	\begin{tabular}{ll}
\runa{T-Var} &
	\condinf{}
	{\Gamma,\Pi \vdash x^p:T \sqcup (\{x^p\},\emptyset)}{where 
	$x^{p'}=uf_{ \sqleq_\Pi}(x,\Gamma)$, and
          $\Gamma(x^{p'})=T$} \\[15mm]
\runa{T-Let-1} &
	\condinf{
		\Gamma,\Pi\vdash e_1^{p_1}:(\delta,\kappa) \\
		\Gamma',\Pi\vdash e_2^{p_2}:T_2
	}
	{\Gamma,\Pi\vdash [\mbox{let}\; x \; e_1^{p_1} \; e_2^{p_2}]^{p}:T_2}{where $\Gamma'=\Gamma[x^{p}:(\delta,\kappa\cup \{x\})]$ and
          $\kappa\neq\emptyset$} \\
\runa{T-Case} &
	\condinf
	{
		\Gamma,\Pi\vdash e^{p}:(\delta,\kappa) \\
		\Gamma',\Pi\vdash e_i^{p_i}:T_i\;\;\;(1\leq i\leq|\vec{\pi}|)
	}
	{\Gamma,\Pi\vdash [\mbox{case}\;e^{p}\;\vec{\pi}\;\vec{o}]^{p'}:T\sqcup(\delta,\kappa)}
	{where $e_i^{p_i}\in\vec{o}$ and $s_i\in\vec{\pi}$ $T=\bigcup_{i=1}^{|\vec{\pi}|}T_i$, and\\
          $\Gamma'=\Gamma[x^p:(\delta,\kappa)]$ if $s_i=x$} \\[18mm]
\runa{T-Ref-read} &
	\condinf
	{\Gamma,\Pi\vdash  e^{p}:(\delta,\kappa)}
	{\Gamma,\Pi\vdash
                    [!e^{p}]^{p'}:T\cup(\delta\cup\delta',\emptyset)}
	{where $\left\{\begin{array}{l}\kappa\neq\emptyset$, $\delta'=\{\nu x^{p'}\mid\nu x\in\kappa\}$, $\nu x_1,\cdots,\nu x_n\in\kappa\\ 
	\{\nu x_1^{p_1},\cdots,\nu x_1^{p_m}\}=uf_{\Upsilon_{p'}}(\nu
                  x_1,\Gamma),\cdots,\\ \quad\{\nu x_n^{p_1'},\cdots,\nu
                  x_n^{p_s'}\}=uf_{\Upsilon_{p'}}(\nu x_n,\Gamma) \\
	T=\Gamma(\nu x_1^{p_1})\cup\cdots\cup\Gamma(\nu
                         x_1^{p_m})\cup\cdots\cup\\ \quad\Gamma(\nu
                         x_n^{p_1'})\cup\cdots\cup\Gamma(\nu
                         x_n^{p_s'})\end{array}\right\}$}
          \\
          & 
	\end{tabular}
	\caption{Selected rules from the type system}
	\label{fig:TypeSys}
\end{table*}


\section{Soundness}\label{sec:Soundness}

The type system is sound in that the type of
an occurrence correspond to the dependencies and the alias information
from the semantics. 
To show this, we will first introduce the
type rules for values and then describe  relation between the semantics and the
type system. 

\subsection{Type rules for values}

In our soundness theorem and its proof, values are mentioned. We
therefore state a collection of type rules for the values presented
int \cref{sec:EnvSto}. The type rules are given in
\cref{fig:ValTypeRules}. We describe the central ones here.

\begin{description}
\item[\runa{Constant}] differs from the rule \runa{T-Const}, since
  most occurrences can evaluate to a constant and as such we know that
  its type should be a base type. Constants can depend on other
  occurrences; we know that $\delta$ can be non-empty, but since
  constants are not locations, we also know that it cannot contain
  alias information, and as such $\kappa$ should be empty.

\item[\runa{Location}] types locations, and their type must be a base
  type. Since locations can depend on other occurrences, we know that
  $\delta$ can be non-empty.  As locations can contains alias
  information, and that a location is considered to always be an alias
  to itself, we know that $\kappa$ can never be empty, as it should
  always contain an internal variable.

	\item[\runa{Closure}] type rule represents abstraction, and as
          such we know that it should have the abstraction type,
          $T_1\rightarrow T_2$, where we need to make an assumption
          about the argument type $T_1$. 
		Since a closure contains the parameter, body, and the
                environment for an abstraction from when it were
                declared, we also need to handle those part in the
                type rule.
                
		The components of the closure are handled in the
                premises, where the environment must be well-typed.
                We also type the body of the abstraction in a type
                environment updated with the type $T_1$ of its
                parameter.


\end{description}

As closures and recursive closures contain an environment, we also
need to define what it means to be a well-typed environment $env$
wrt. a type environment: Every variable bound in $env$ is bound to a
value that is well-typed wrt. $\Gamma$.

\begin{definition}[Well-typed environments]\label{def:TEnv}
	Let $v_1,\cdots,v_n$ be values such that $\Gamma,\Pi\vdash v_i:T_i$, for $1\leq i\leq n$.
	Let $env$ be an environment given by $env=[x_1\mapsto
        v_1,\cdots,x_n\mapsto v_n]$, $\Gamma$ be a type environment,
        and $\Pi$ be the approximated order of program points. 
	We say that:
	$$\Gamma,\Pi\vdash env$$
	iff 
	\begin{itemize}
		\item For all $x_i\in \dom(env)$ then $\exists x_i^p\in \dom(\Gamma)$ where $\Gamma(x_i^p)=T_i$ then 
			$$\Gamma,\Pi\vdash env(x_i):T_i$$
	\end{itemize}
\end{definition}

\begin{table*}
	\setlength\tabcolsep{8pt}
	\begin{tabular}{ll}
		\runa{Constant} &
			\inference[]{}
				{\Gamma,\Pi\vdash  c:(\delta, \emptyset)}\\[1cm]

		\runa{Location} &
			\inference[]{}
				{\Gamma,\Pi\vdash  \loc:(\delta, \kappa)}\\
				Where $\kappa\neq\emptyset$\\[1cm]

		\runa{Closure} &
			\inference[]
				{
					\Gamma,\Pi\vdash env \\
					\Gamma[x^{p}:T_1],\Pi\vdash e^{p'}:T_2
				}
				{\Gamma,\Pi\vdash \left\langle x^{p}, e^{p'}, env \right\rangle^{p''}:T_1\rightarrow T_2}


	\end{tabular}
	\caption{Type rules for values}
	\label{fig:ValTypeRules}
\end{table*}

\subsection{Notions of agreement}

As our soundness theorem relates the type system to the semantics, we
must define what it means for instances of the binding models of the
semantics and the type system to agree.

First we define what it means for a set of occurrences $\delta$ to
faithfully represent the information from a dependency pair $(L,V)$
wrt. a environment $env$.

\begin{definition}[Dependency agreement]\label{def:DepAgree}
	We say that:
	$$(env,(L,V))\models\delta$$
	if
	\begin{itemize}
		\item $V\subseteq\delta$,
		\item For all $\loc^p\in L$ where
                  $env^{\loc}\neq\emptyset$, we then have $env^{-1} \loc \subseteq \kappa_i^0$ for some $\kappa_i^0\in\delta$
		\item For all $\loc^p\in L$ where $env^{\loc}=\emptyset$ then there exists a $\kappa_i^0\in\delta$ such that $\kappa_i^0\subseteq\cat{IVar}$
	\end{itemize}
      \end{definition}

      Since types can contain alias information $\kappa$, we also need
      to define what it means for the information in $\kappa$ to be
      known to an environment $env$.  If there exists alias
      information in $env$, then there exists an alias base
      $\kappa^0_i\in\kappa^0$ such that the alias information known to
      $env$ is included in that of $\kappa^0_i$, and there exists a
      $\nu x\in\kappa$, such that $\nu x\in \kappa^0_i$.  If there is
      no currently known alias information, we simply check that there
      exists a corresponding internal variable, that is part of an
      alias base.

      \begin{definition}[Alias agreement]\label{def:AliasAgree}
	We say that
	$$(env,(w,\sqleq_w),\loc)\models(\Gamma,\kappa)$$
	if
	\begin{itemize}
		\item $\exists \loc^p\in \dom(w).\nu x^p\in \dom(\Gamma)\Rightarrow\nu x\in\kappa$
		\item $env^{-1}(\loc)\neq\emptyset.\exists \kappa^0_i\in\kappa^0\Rightarrow
			(env^{-1}(\loc)\subseteq\kappa^0_i)\wedge(\exists \loc^p\in \dom(w).\nu x^p\in \dom(\Gamma)\Rightarrow\nu x\in\kappa^0_i\wedge\nu x\in\kappa)$
		\item $env^{-1}(\loc)=\emptyset.\exists \kappa^0_i\in\kappa^0\Rightarrow
			(\exists \loc^p\in \dom(w).\nu x^p\in \dom(\Gamma)\Rightarrow\nu x\in\kappa^0_i\wedge\nu x\in\kappa)$
	\end{itemize}
\end{definition}

If a value $v$ is a location, then we check that both the set of occurrences agrees with the dependency pair, presented in \cref{def:DepAgree}, 
and check if the alias information agrees with the semantics, \cref{def:AliasAgree}.
If the value $v$ is not a location, then its type can either be an
abstraction type or a base type.
For the base type, we check that the agreement between the set of occurrences and the dependency pair agrees.
If the type is an abstraction, then we check that $T_2$ agrees with 
the binding model. 
We are only concerned about the return type $T_2$ for abstractions,
since if the argument parameter is used in the body of the
abstraction, then the dependencies would already be part of the return
type. 

\begin{definition}[Type agreement]\label{def:TAgree}
	We say that
	$$(env,v,(w,\sqleq_w),(L,V))\models(\Gamma,T)$$
	iff
	\begin{itemize}
		\item $v\neq\loc$ and $T=T_1\rightarrow T_2$:
		\begin{itemize}
			\item $(env,v,(w,\sqleq_w),(L,V))\models(\Gamma,T_2)$
		\end{itemize}

		\item $v\neq\loc$ and $T=(\delta,\kappa)$:
		\begin{itemize}
			\item $(env,(L,V))\models\delta$
		\end{itemize}

		\item $v=\loc$ then $T=(\delta,\kappa)$ where:
		\begin{itemize}
			\item $(env,(L,V))\models\delta$
			\item $(env,(w,\sqleq_w),v)\models(\Gamma,\kappa)$
		\end{itemize}
	\end{itemize}
\end{definition}


\begin{definition}[Environment agreement]\label{def:EnvAgree}
	We say that $(env,sto,(w,\sqleq_w))\models(\Gamma,\Pi)$ if 
	\begin{enumerate}
		\item \label{prop:1} $\forall x\in \dom(env).(\exists x^p\in \dom(w))\wedge(x^p\in \dom(w)\Rightarrow \exists x^p\in \dom(\Gamma))$
		\item \label{prop:2} $\forall x^p\in \dom(w).x^p\in \dom(\Gamma)\Rightarrow env(x)=v\wedge w(x^p)=(L,V)\wedge\Gamma(x^p)=T.\\(env,v,(w,\sqleq_w),(L,V))\models (\Gamma,T)$
		\item \label{prop:3} $\forall \loc\in \dom(sto).(\exists \loc^p\in \dom(w))\wedge(\exists \nu x.\forall p\in\{p'\mid\loc^{p'}\in \dom(w)\}\Rightarrow\\\nu x^p\in \dom(\Gamma))$
		\item \label{prop:4} $\forall \loc^p \in \dom(w).\exists\nu x^{p}\in \dom(\Gamma)\Rightarrow w(\loc^p)=(L,V)\wedge\Gamma(\nu x^{p})=\\T.(env,\loc,(w,\sqleq_w),(L,V))\models T$
		\item \label{prop:5} if $p_1\sqleq_w p_2$ then $p_1\sqleq_\Pi p_2$
	\item \label{prop:6} $\forall \loc^p\in \dom(w).\exists \nu
          x^p\in \dom(\Gamma)\Rightarrow\exists
          p'\in\cat{P}.\IP_{\sqleq_w}(\loc,w)\in
          \IP_{\Upsilon_{p'}}(\nu x,\Gamma)$ 
	\end{enumerate}
\end{definition}



\begin{itemize}
	\item The agreement for local information only relates the information currently known by $env$, and that the information known by $w$ and $\Gamma$ agrees, in respect to \cref{def:TAgree}.
		This is ensured by \eqref{prop:1} and \eqref{prop:2}.

	\item We similarly handle agreement for the global information
          known, which is ensured by \eqref{prop:3} and
          \eqref{prop:4}. 
		Since $\Gamma$ contains the global information for
                references, we require that there exists a
                corresponding internal variable to the currently known
                locations, by comparing them by program points. 
		We also ensure that the dependency information for a location
                occurrence agrees with the type of a corresponding
                internal variable occurrence as given by \cref{def:TAgree}. 

	\item We also need to ensure that $\Pi$ is a good
          approximation of the order $\sqleq_w$ and the greatest
          binding function for $p$-chains ensures that we always get
          the necessary reference occurrences. 
	\eqref{prop:5} ensures that the ordering information
        $\sqleq_w$ agrees with that of $\Pi$.

   \item We finally need to ensure that for every location
        known, there exists a corresponding internal variable
        where, getting the greatest binding of this occurrence,
        $\loc^p$, there exists a program point $p'$, such that
        looking up all greatest bindings for the $p'$-chain, there
        exists an internal variable occurrence that corresponds to
        $\loc^p$. This is captured by \eqref{prop:6}.
\end{itemize}




\begin{lemma}[History]\label{lemma:His}
	Suppose $e^p$ is an occurrence, that
	$$env\vdash\left\langle e^{p},sto,(w,\sqleq_w),p'\right\rangle\rightarrow\left\langle v,sto',(w',\sqleq_w'),(L,V),p''\right\rangle$$
		and $x^{p_1}\in \dom(w')\backslash \dom(w)$.
		Then $x\notin fv(e^{p})$
\end{lemma}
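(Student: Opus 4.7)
I would prove this by induction on the derivation of
\[ env\vdash\left\langle e^{p},sto,(w,\sqleq_w),p'\right\rangle\rightarrow\left\langle v,sto',(w',\sqleq_w'),(L,V),p''\right\rangle, \]
first strengthening the statement to a sharper auxiliary invariant: every $x^{p_1}\in \dom(w')\setminus \dom(w)$ has $x$ occurring as a \emph{bound} variable somewhere inside $e^p$. The original lemma then follows immediately from the paper's assumption that all binding occurrences involve distinct names (enforced by $\alpha$-conversion), which guarantees $bv(e^p)\cap fv(e^p)=\emptyset$ and hence $x\notin fv(e^p)$.

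The base cases, namely \runa{Var} and the rule for constants, leave the dependency function untouched, so $\dom(w')\setminus \dom(w)$ is empty and the invariant holds vacuously. The rules that merely thread $w$ through several sub-derivations without adding a variable entry --- function application, \runa{Ref}, \runa{Ref-read}, and \runa{Ref-write} (which introduce only \emph{location} entries $\loc^{p}$, not variable entries) --- follow by applying the induction hypothesis to each premise: any newly added $x^{p_1}$ must stem from evaluating some subexpression $e_i$ of $e^p$, and the IH gives $x\in bv(e_i)\subseteq bv(e^p)$.

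The real content lies in the binding constructs. For \runa{Let} on $[\mbox{let}\;y\;e_1^{p_1}\;e_2^{p_2}]^{p'}$ the new entries in $w'$ accumulate in three phases: the evaluation of $e_1^{p_1}$ (handled by IH), the explicit update $w_2=w_1[y^{p_1}\mapsto(L_1,V_1)]$ which adds exactly the Let-binder $y$, and the evaluation of $e_2^{p_2}$ starting from $w_2$ (IH again). The rules for let-rec, for \runa{Case} with a variable pattern, and for the body of an applied closure --- each of which binds a formal parameter or pattern variable into $w$ --- are handled analogously. The main obstacle is the third phase: the IH applied to $e_2^{p_2}$ yields the invariant only for $\dom(w')\setminus \dom(w_2)$, so one has to decompose $\dom(w')\setminus \dom(w)$ as $(\dom(w_1)\setminus \dom(w))\cup\{y^{p_1}\}\cup(\dom(w')\setminus \dom(w_2))$ and then invoke the distinct-binders convention to rule out clashes between the freshly introduced $y$ and variable names appearing elsewhere in the derivation.
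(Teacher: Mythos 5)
The paper itself states \cref{lemma:His} without proof (the proof is deferred to the full arXiv version), so your proposal can only be judged on its own terms. The overall strategy --- induction on the derivation plus the distinct-binders convention --- is the right one, but your strengthened invariant is too strong and the induction does not close at the application rule. You claim that every $x^{p_1}\in\dom(w')\setminus\dom(w)$ has $x$ occurring as a \emph{bound variable of $e^p$}, and you justify the application case by saying any new entry ``must stem from evaluating some subexpression $e_i$ of $e^p$.'' That is false: applying a closure $\langle y, e_0^{p_0}, env_0\rangle$ forces the evaluation of the closure body $e_0^{p_0}$, which is \emph{not} a subexpression of the application $o_1\,o_2$ (the closure may have been created far away and reached $o_1$ via the environment). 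That evaluation adds an entry for the parameter $y$ and, recursively, entries for any let-, case- or $\lambda$-binders inside $e_0^{p_0}$; none of these need occur in $e^p$ at all. Concretely, for the inner application $(x^{3}\,1^{4})^{5}$ in a program such as $(\mbox{let}\;x\;(\lambda y.y^{1})^{2}\;\ldots)$, the evaluation adds $y^{4}$ to $w$, yet $y$ is not a bound variable of $(x\,1)$ --- it does not occur in that occurrence at all. The lemma's conclusion $y\notin fv(x\,1)$ still holds, but your induction hypothesis, as strengthened, is simply not preserved, so the step ``$x\in bv(e_i)\subseteq bv(e^p)$'' breaks down.

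The repair is to weaken the auxiliary invariant to something global rather than local to $e^p$: every new entry $x^{p_1}$ is the unique binding occurrence of $x$ somewhere in the overall program (in $e^p$ itself or in the body of a closure reachable through $env$ or $sto$), and under the convention that binding occurrences are pairwise distinct and distinct from free names, every occurrence of such an $x$ lies inside the scope of that unique binder; hence $x$ cannot be free in $e^p$ whether or not the binder sits inside $e^p$. You will also need a companion invariant relating $env$, $sto$ and the closures they contain to the program, so that the induction hypothesis is applicable to the evaluation of a closure body. With that adjustment your Let/Case analysis and the three-phase decomposition $(\dom(w_1)\setminus\dom(w))\cup\{y^{p_1}\}\cup(\dom(w')\setminus\dom(w_2))$ are fine.
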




\begin{lemma}[Strengthening]\label{lemma:Strength}
	If $\Gamma[x^{p'}:T'],\Pi\vdash e^{p}:T$ and $x\notin fv(e^p)$, then $\Gamma,\Pi\vdash e^{p}:T$
\end{lemma}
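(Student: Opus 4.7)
The plan is to prove \cref{lemma:Strength} by structural induction on the derivation of $\Gamma[x^{p'}:T'],\Pi \vdash e^{p}:T$. The intuition is that because $x \notin fv(e^{p})$, no typing rule applied in the derivation ever consults the binding at $x^{p'}$, so the derivation can be replayed verbatim over the smaller environment $\Gamma$.

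The base case is \runa{T-Var}, where $e^{p} = y^{p}$ for some variable $y$. Since $x \notin fv(y^{p})$ we have $y \neq x$. By \cref{def:GBindPi}, the computation $\IP_{\sqleq_\Pi}(y, \Gamma[x^{p'}:T'])$ ranges only over $y$-occurrences in the domain of the environment; since $y \neq x$, inserting or removing the binding at $x^{p'}$ leaves this set unchanged. Hence the lookup, and therefore the resulting type, coincide for $\Gamma$ and for $\Gamma[x^{p'}:T']$. The case for constants is immediate since no lookup on $\Gamma$ occurs.

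For the compound rules (\runa{T-Let-1}, \runa{T-Let-2}, \runa{T-Case}, \runa{T-Ref-read}, application, and reference creation) I would apply the induction hypothesis to each premise. For \runa{T-Let-1}, which binds a fresh variable that by the $\alpha$-conversion assumption stated in \cref{sec:lang} we may take to be some $y \neq x$, the body $e_2^{p_2}$ is typed under $\Gamma[x^{p'}:T'][y^{p}:(\delta,\kappa\cup\{y\})]$. Since $y \neq x$ the two updates commute, so this environment equals $\Gamma[y^{p}:(\delta,\kappa\cup\{y\})][x^{p'}:T']$; the induction hypothesis with $\Gamma'' = \Gamma[y^{p}:(\delta,\kappa\cup\{y\})]$ strips off the $x^{p'}$ update and yields the desired premise in $\Gamma''$. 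The same argument handles \runa{T-Case} (pattern-bound variables are fresh by $\alpha$-conversion) and the recursive-let case. For \runa{T-Ref-read}, the lookups $\IP_{\Upsilon_{p'}}(\nu x_i, \Gamma[x^{p'}:T'])$ only range over internal-variable occurrences in the domain; since $x$ is an ordinary variable and $fv$ does not include internal variables, these lookups are insensitive to the binding at $x^{p'}$.

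The main obstacle I anticipate is not conceptual but administrative: carefully invoking commutativity of environment updates whenever a typing rule extends the context with a new binding, and appealing to the stated $\alpha$-conversion convention to guarantee that such new bindings never clash with $x$. Once that bookkeeping is in place, the induction goes through uniformly and yields $\Gamma,\Pi\vdash e^{p}:T$.
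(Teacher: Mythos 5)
Your proposal is correct, and it is worth noting that the paper states \cref{lemma:Strength} without any proof at all, so there is no authorial argument to compare against; an induction on the typing derivation, exploiting that the binding at $x^{p'}$ is never consulted when $x\notin fv(e^p)$, is exactly the right (and canonical) argument. Your treatment of the lookups is the key point and is handled correctly: for \runa{T-Var} the set over which $\IP_{\sqleq_\Pi}(y,\cdot)$ takes its supremum contains only $y$-occurrences, so it is unaffected by adding or removing $x^{p'}$ when $y\neq x$, and for \runa{T-Ref-read} the lookups $\IP_{\Upsilon_{p'}}(\nu x_i,\cdot)$ range over internal variables, which are disjoint from $\cat{Var}$. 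One refinement to your bookkeeping: be cautious with the appeal to $\alpha$-conversion in the \runa{T-Let-1} and \runa{T-Case} cases. In this system types are not obviously invariant under renaming of binders, because the alias component $\kappa$ records variable \emph{names} (\runa{T-Let-1} inserts $\kappa\cup\{x\}$ into the updated environment), so renaming a shadowing binder named $x$ to a fresh $y$ could change the derived type $T$ rather than preserve it, and no $\alpha$-equivalence lemma is available to license that step. The cleaner justification is the standing convention of \cref{sec:lang} that all binding occurrences use distinct names, which already guarantees that no binder inside $e^p$ clashes with the $x$ of the lemma; with that in place, your commutation-of-updates step goes through and the induction closes.
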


\subsection{The soundness theorem}

We can now present the soundness theorem for our type system.

\begin{theorem}[Soundness]
	Suppose $e^{p'}$ is an occurrence where
	\begin{itemize}
		\item $env\vdash\left\langle e^{p'},sto,(w,\sqleq_w),p\right\rangle\rightarrow\left\langle v,sto',(w',\sqleq_w'),(L,V),p''\right\rangle$,
		\item $\Gamma,\Pi\vdash e^{p'} : T$
		\item $\Gamma,\Pi\vdash env$
		\item $(env,sto,(w,\sqleq_w))\models(\Gamma,\Pi)$
	\end{itemize}
	Then we have that
	\begin{itemize}
		\item $\Gamma,\Pi\vdash v:T$
		\item $(env,sto',(w',\sqleq_w'))\models(\Gamma,\Pi)$
		\item $(env,(w',\sqleq_w'),v,(L,V))\models(\Gamma,T)$
	\end{itemize}
\end{theorem}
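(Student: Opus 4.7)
The plan is to proceed by induction on the height of the big-step derivation
$env\vdash\langle e^{p'},sto,(w,\sqleq_w),p\rangle\rightarrow\langle v,sto',(w',\sqleq_w'),(L,V),p''\rangle$,
with case analysis on the last semantic rule applied. For every case I need to inspect which type rule(s) could have produced $\Gamma,\Pi\vdash e^{p'}:T$, and then verify the three conclusions simultaneously, because they interact: the updated agreement $(env,sto',(w',\sqleq_w'))\models(\Gamma,\Pi)$ is used in subsequent recursive appeals, and the type-agreement of the returned value feeds into how dependency pairs propagate. The base cases \runa{Var} and \runa{Const} are essentially lookups: for \runa{Var}, I inspect \runa{T-Var}, use clauses \eqref{prop:1} and \eqref{prop:2} of environment agreement to match $\IP_{\sqleq_w}(x,w)$ with $\IP_{\sqleq_\Pi}(x,\Gamma)$, and then observe that appending $x^{p'}$ to both the semantic $V$ component and the type's $\delta$ component preserves dependency agreement by \cref{def:DepAgree}.

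For the \runa{Let} case, I apply the IH to the premise for $e_1^{p_1}$, obtaining $\Gamma,\Pi\vdash v_1:T_1$ and the updated agreement, then extend the environment by $[x\mapsto v_1]$ on the semantic side and by $x^{p}:(\delta,\kappa\cup\{x\})$ on the type side (if \runa{T-Let-1} applies; \runa{T-Let-2} is analogous but simpler). The key observation is that the updated $w_2=w_1[x^{p_1}\mapsto(L,V)]$ and the updated $\Gamma'$ still satisfy \eqref{prop:1}--\eqref{prop:6}, so the second IH applies to $e_2^{p_2}$. For \runa{Ref}, I produce a fresh $\nu x$ to accompany the new location $\loc$ on the type side, so that \eqref{prop:3}, \eqref{prop:4}, and \eqref{prop:6} hold for the newly bound $\loc^{p'}\mapsto\nu x^{p'}$; here alias agreement \cref{def:AliasAgree} is immediate since $env^{-1}(\loc)=\emptyset$ for a freshly allocated location.

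The principal difficulty lies in \runa{Ref-read} and \runa{Ref-write}, where alias information in $\kappa$ and the $p$-chain lookup $\IP_{\Upsilon_{p'}}$ must be reconciled with the semantic lookup $\IP_{\sqleq_w}(\loc,w)$. For \runa{Ref-read}, after applying the IH to the subderivation for $e^{p_1}$, I get type agreement for the returned location $\loc$, which by \cref{def:AliasAgree} gives me an internal variable $\nu x\in\kappa$ with $\nu x\in\kappa_i^0$ agreeing with $env^{-1}(\loc)$. Property \eqref{prop:6} then furnishes a $p'$-chain containing a $\nu x$-occurrence matching $\IP_{\sqleq_w'}(\loc,w')$; since \runa{T-Ref-read} takes the union $\Gamma(\nu x_1^{p_1})\cup\cdots$ over \emph{all} $p'$-chain predecessors, the semantically returned value's type is covered. \runa{Ref-write} is similar but additionally requires showing that the store update $sto_2[\loc\mapsto v]$ and the dependency-function update $w_2[\loc^{p'}\mapsto(L_2,V_2)]$ preserve \eqref{prop:4} at the newly written program point; the \cref{lemma:His} and \cref{lemma:Strength} come in here (and in the application and \runa{Let rec} cases) to justify that fresh binding program points do not appear free in the occurrence being typed and can therefore be added to $\Gamma$ without disturbing existing type derivations.

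The remaining cases are the application rule (using \runa{Closure} typing, instantiating $T_1$ by the argument type obtained from the IH, and invoking the IH on the body under the extended environment) and \runa{T-Case}, where the semantics picks one branch $i$ but the type rule takes $\bigcup_i T_i$; here the IH on the selected branch yields $\Gamma,\Pi\vdash v:T_i$, and type agreement is preserved under the $\cup$ operation because both $\delta$- and $\kappa$-components only grow, and \cref{def:DepAgree}, \cref{def:AliasAgree} are monotone in this direction. I expect the hardest bookkeeping to be in the \runa{Ref-write} case, maintaining clauses \eqref{prop:4} and especially \eqref{prop:6} simultaneously when a new location binding is introduced into both $w'$ and $\Gamma$ at a program point $p'$ that may not previously have been maximal in any $p'$-chain; this is where I anticipate needing the most care to verify that the extended $\sqleq_w'$ remains consistent with $\sqleq_\Pi$ and that the newly created $p'$-chains capture the updated immediate predecessor.
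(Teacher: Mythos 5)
Your plan matches the paper's proof essentially step for step: induction on the height of the big-step derivation with case analysis on the last rule, simultaneous verification of the three conclusions via the agreement definitions (\cref{def:DepAgree}, \cref{def:AliasAgree}, \cref{def:TAgree}, \cref{def:EnvAgree}), reconciliation of $\IP_{\sqleq_w}$ with $\IP_{\sqleq_\Pi}$/$\IP_{\Upsilon_{p'}}$ through clauses \eqref{prop:5} and \eqref{prop:6}, and the use of \cref{lemma:His} and \cref{lemma:Strength} to discharge bindings introduced during evaluation of subterms. The paper's own proof is likewise only an outline (showing the \runa{Var}, \runa{Case}, and \runa{Ref-read} cases), and your identification of where the real bookkeeping lies (\runa{Ref-read}/\runa{Ref-write} and the chain lookups) is consistent with it.
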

\begin{proof}(Outline)
	The proof proceeds by induction on the height of the derivation tree for 
	$$env\vdash\left\langle e^{p'},sto,\psi,p\right\rangle\rightarrow\left\langle v,sto',\psi',(L,V),p''\right\rangle$$
	We will only show the proof of four rules here, for
        \runa{Var}, \runa{Case}, \runa{Ref}, and \runa{Ref-write}.

	\begin{description}
		\item[\runa{Var}] Here $e^{p'}=x^{p'}$, where
\begin{figure}[H]
	\setlength\tabcolsep{8pt}
	\begin{tabular}{l}
		\input{sections/appendix/ColRules/var.tex}
	\end{tabular}
\end{figure}
And from our assumptions, we have:
\begin{itemize}
	\item $\Gamma,\Pi\vdash x^{p'} : T$
	\item $\Gamma,\Pi\vdash env$
	\item $(env,sto,(w,\sqsubseteq_w))\models(\Gamma,\Pi)$
\end{itemize}
To type the occurrence $x^{p'}$ we use the rule \runa{T-Var}:
\begin{figure}[H]
	\setlength\tabcolsep{8pt}
	\begin{tabular}{l}
		\runa{T-Var}\\[0.2cm]
			\inference[]{}
			{\Gamma,\Pi \vdash x^p:T \sqcup (\{x^p\},\emptyset)}
	\end{tabular}
\end{figure}
Where $x^{p''}=uf_{\sqsubseteq_\Pi}(x,\Gamma)$, $\Gamma(x^{p''})=T$.

We need to show that \cat{1)} $\Gamma,\Pi\vdash c:T$, \cat{2)} $(env,sto',(w',\sqsubseteq_w'))\models(\Gamma,\Pi)$, and\\
\cat{3)} $(env,v,(w',\sqsubseteq_w'),(L,V))\models(\Gamma,T)$.
\begin{description}
	\item[1)] Since, from our assumption, we know that $\Gamma,\Pi\vdash env$, we can then conclude that $\Gamma,\Pi\vdash v:T$

	\item[2)] Since there are no updates to $sto$ and $(w,\sqsubseteq_w)$, we then know that $(env,sto,(w,\sqsubseteq_w))\models(\Gamma,\Pi)$ holds after an evaluation.

	\item[3)] Since there are no updates to $sto'$ and
          $(w',\sqsubseteq_w')$, since $(L,V) = w(x^{p''})$ and since
          $T = \Gamma(x^{p''})$, we then know that
          $(env,v,(w',\sqsubseteq_w'),(L,V))\models(\Gamma,T)$.  Due
          to \cref{def:TAgree} we can conclude that:
		$$(env,v,(w',\sqsubseteq_w'),(L,V\cup\{x^{p''}\}))\models(\Gamma,T\sqcup \{x^{p''}\})$$
\end{description}

		\item[\runa{Case}] Here $e^{p'}=\left[\mbox{case}\;e^{p''}\;\tilde{\pi}\;\tilde{o}\right]^{p'}$, where
\begin{figure}[H]
	\setlength\tabcolsep{8pt}
	\begin{tabular}{l}
		\input{sections/appendix/ColRules/case.tex}
	\end{tabular}
\end{figure}

And from our assumptions, we have that:
\begin{itemize}
	\item $\Gamma,\Pi\vdash \left[\mbox{case}\;e^{p''}\;\tilde{\pi}\;\tilde{o}\right]^{p'}:T$,
	\item $\Gamma,\Pi\vdash env$
	\item $(env,sto,(w,\sqsubseteq_w))\models(\Gamma,\Pi)$,
\end{itemize}
To type $[\mbox{case}\;e^{p''}\;\tilde{\pi}\;\tilde{o}]^{p'}$ we need to use the \runa{T-Case} rule, where we have:
\begin{figure}[H]
	\setlength\tabcolsep{8pt}
	\begin{tabular}{l}
		\runa{T-Case}\\[0.2cm]
			\inference[]
				{\Gamma,\Pi\vdash e^{p}:(\delta,\kappa) &\\
				\Gamma',\Pi\vdash e_i^{p_i}:T_i\;\;\;(1\leq i\leq|\tilde{\pi}|)}
				{\Gamma,\Pi\vdash [\mbox{case}\;e^{p}\;\tilde{\pi}\;\tilde{o}]^{p'}:T}
	\end{tabular}
\end{figure}
Where $T=T'\sqcup(\delta,\kappa)$, $T'=\bigcup_{i=1}^{|\tilde{\pi}|}T_i$, $e_i^{p_i}\in\tilde{o}$ and $s_i\in\tilde{\pi}$, and $\Gamma'=\Gamma[x^p:(\delta,\kappa)]$ if $s_i=x$.

We must show that \cat{1)} $\Gamma,\Pi\vdash v:T$, \cat{2)} $(env,sto',(w',\sqsubseteq_w'))\models(\Gamma,\Pi)$, and \\
\cat{3)} $(env,v,(w',\sqsubseteq_w'),(L,V))\models(\Gamma,T)$.

To conclude, we first need to show for the premises, where due to our assumption and from the first premise, we can use the induction hypothesis to get:
\begin{itemize}
	\item $\Gamma,\Pi\vdash v_e:(\delta,\kappa)$,
	\item $(env,sto'',(w'',\sqsubseteq_w''))\models(\Gamma,\Pi)$,
	\item $(env,v,(w'',\sqsubseteq_w''),(L,V))\models(\Gamma,(\delta,\kappa))$
\end{itemize}
Since in the rule \runa{T-Case} we take the union of all patterns, we can then from the second premise:
\begin{itemize}
	\item $\Gamma,\Pi\vdash v:T_j$,
	\item $(env,sto',(w',\sqsubseteq_w'))\models(\Gamma,\Pi)$,
	\item $(env,v,(w',\sqsubseteq_w'),(L,V))\models(\Gamma,T_j)$
\end{itemize}

If we have \cat{a)} $\Gamma',\Pi\vdash env[env']$ and \cat{b)} $(env[env'],sto'',(w''',\sqsubset_w''))\models(\Gamma',\Pi)$, we can then conclude the second premise by our induction hypothesis.
\begin{description}
	\item[a)] We know that either we have $\Gamma'=\Gamma[x\mapsto(\delta,\kappa)]$ and $env[x\mapsto v_e]$ if $s_j=x$, or $\Gamma'=\Gamma$ and $env$ if $s_j\neq x$.
		\begin{itemize}
			\item if $s_j\neq x$: Then we have $\Gamma,\Pi\vdash env$
			\item if $s_j=x$: Then we have $\Gamma[x\mapsto(\delta,\kappa)],\Pi\vdash env[x\mapsto v_e]$, which hold due to the first premise.
		\end{itemize}
	\item[b)] We know that either we have $\Gamma'=\Gamma[x\mapsto(\delta,\kappa)]$ and $env[x\mapsto v_e]$ if $s_j=x$, or $\Gamma'=\Gamma$ and $env$ if $s_j\neq x$.
		\begin{itemize}
			\item if $s_j\neq x$: then we have $(env,sto'',(w'',\sqsubset_w''))\models(\Gamma,\Pi)$.
			\item if $s_j=x$: then $(env[x\mapsto v_e],sto'',(w''',\sqsubset_w''))\models(\Gamma[x\mapsto(\delta,\kappa)],\Pi)$, since we know that $(env,sto'',(w'',\sqsubset_w''))\models(\Gamma,\Pi)$, we only need to show for $x$.
				Since we have $x\in dom(env)$, $x^{p_j}\in dom(w''')$ and $x^{p_j}\in dom(\Gamma')$ and due to the first premise, we know that $(env[x\mapsto v_e],sto'',(w''',\sqsubset_w''))\models(\Gamma[x\mapsto(\delta,\kappa)],\Pi)$.
		\end{itemize}
\end{description}
Based on \cat{a)} and \cat{b)} we can then conclude:

\begin{description}
	\item[1)] Since $\Gamma',\Pi\vdash v:T_j$, then we also must have $\Gamma',\Pi\vdash v:T$, since $T$ only contains more information than $T_j$.
	\item[2)] By the second premise, \cref{lemma:His}, and \cref{lemma:Strength}, we can then get 
		$$(env,sto',(w',\sqsubseteq_w'))\models(\Gamma,\Pi)$$
	\item[3)] Due to \cat{1)}, \cat{2)}, \cat{a)}, and \cat{b)} we can then conclude that
		$$(env,v,(w',\sqsubseteq_w'),(L,V))\models(\Gamma,T)$$
\end{description}

		\item[\runa{Ref-read}] Here $e^{p'}=[!e_1^{p_1}]^{p'}$, where
\begin{figure}[H]
	\setlength\tabcolsep{8pt}
	\begin{tabular}{l}
		\input{sections/appendix/ColRules/refread.tex}
	\end{tabular}
\end{figure}
And from our assumptions, we have that:
\begin{itemize}
	\item $\Gamma,\Pi\vdash [!e_1^{p_1}]^{p'}:T$,
	\item $\Gamma;\Pi\vdash env$
	\item $(env,sto,(w,\sqsubseteq_w))\models(\Gamma,\Pi)$,
\end{itemize}
To type $[!e_1^{p_1}]^{p'}$ we need to use the \runa{T-Ref-read} rule, where we have:
\begin{figure}[H]
	\setlength\tabcolsep{8pt}
	\begin{tabular}{l}
		\runa{T-Ref-read}\\[0.2cm]
			\inference[]
				{\Gamma,\Pi\vdash  e^{p}:(\delta,\kappa)}
				{\Gamma,\Pi\vdash [!e^{p}]^{p'}:T\sqcup(\delta\cup\delta',\emptyset)}\\
	\end{tabular}
\end{figure}
Where $\kappa\neq\emptyset$, $\delta'=\{\nu x^{p'}\mid\nu x\in\kappa\}$, $\nu x_1,\cdots,\nu x_n\in\kappa$.\\ 
$\{\nu x_1^{p_1},\cdots,\nu x_1^{p_m}\}=uf_{\Upsilon_{p'}}(\nu x_1,\Gamma),\cdots,\{\nu x_n^{p_1'},\cdots,\nu x_n^{p_s'}\}=uf_{\Upsilon_{p'}}(\nu x_n,\Gamma)$, and\\
$T=\Gamma(\nu x_1^{p_1})\cup\cdots\cup\Gamma(\nu x_1^{p_m})\cup\cdots\cup\Gamma(\nu x_n^{p_1'})\cup\cdots\cup\Gamma(\nu x_n^{p_s'})$.

We must show that \cat{(1)} $\Gamma,\Pi\vdash v:T$, \cat{(2)} $(env,sto',(w',\sqsubseteq_w'))\models(\Gamma,\Pi)$, and\\
\cat{(3)} $(env,v,(w',\sqsubseteq_w'),(L,V))\models(\Gamma,T)$.

To conclude, we first need to show for the premises, where due to our assumption and from the premise, we can use the induction hypothesis to get:
\begin{itemize}
	\item $\Gamma,\Pi\vdash \loc:(\delta,\kappa)$,
	\item $(env,sto',(w',\sqsubseteq_w'))\models(\Gamma,\Pi)$,
	\item $(env,v,(w',\sqsubseteq_w'),(L,V))\models(\Gamma,(\delta',\kappa'))$
\end{itemize}

Due to $(env,sto',(w',\sqsubseteq_w'))\models(\Gamma,\Pi)$ and $(env,v,(w',\sqsubseteq_w'),(L,V))\models(\Gamma,(\delta',\kappa'))$, and due to our assumptions, we can conclude that:
\begin{description}
	\item[(1)] $\Gamma,\Pi\vdash v:T$,

	\item[(2)] $(env,sto',(w',\sqsubseteq_w'))\models(\Gamma,\Pi)$,

	\item[(3)] $(env,v,(w',\sqsubseteq_w'),(L\cup\{\loc^{p''}\},V))\models(\Gamma,T\sqcup(\delta\cup\delta',\emptyset))$
\end{description}

	\end{description}
\end{proof}


\section{An implementation}

We have made an implementation in the Rust programming language. It
includes a parser, and evaluator, a type checker, and an approximator
for an order of programs points. The approximator is based on the
given type system, where it is derived from the structure of the type
system.  The implementation is hosted on Github and can be found at
\cite{implementation}.

\section{Conclusion}\label{sec:Conc}

We have introduced a type system for local data-flow analysis for a
subset of ReScript that includes functional as well as imperative
feature, notably that of references.

The type system provides a safe approximation of the data flow in an
expression. This also allows us to reason about security
properties. In particular, the notion of non-interference introduced
by Goguen and Meseguer \cite{goguen-meseguer} and studied in information-flow analysis can be
understood in this setting. A program satisfies the non-interference
property if the variables classified as \emph{low} cannot be affected
by variables classified as \emph{high}. This corresponds to the
absence of chains in $\Pi$ in which low occurrences appear below high
occurrences. A topic of further work is to understand the relative
expressive power of our system wrt. the systems of Volpano and Smith
\cite{volpano-smith-96,volpano-smith-97}.

On the other hand, the system contains slack. In particular, the type
system is monomorphic. This means that a locally declared abstraction
cannot be used at multiple places, even though this may be safe, as
this would mean it would contain occurrences at multiple program
points. Moreover, abstractions cannot be bound to references.

Polymorphism for the base type $(\delta,\kappa)$ would allow 
abstractions to be used multiple times in an occurrence. Consider as
an example

\begin{lstlisting}[language=Caml, mathescape=true]
(let x ($\lambda$ y.y$^1$)$^2$ (x$^3$ (x$^4$ 1$^5$)$^6$)$^7$)$^8$
\end{lstlisting}

Occurrences such as this would now become typable,
since when typing the applications, the type of the argument changes,
as the occurrence $x^4$ is present in the second application. 

The way references are defined currently in the type system, they cannot be bound to abstractions.
If this should be introduced a couple of questions need to be evaluated.
First there should be looked into base type polymorphism, the second
would be to look into type polymorphism, i.e., allow a reference to be bound to
an arrow type at one point and a base type at another.




A next step is to devise a type inference algorithm for the type
system. An inference algorithm must compute an approximated order of
program points, a proper $\kappa_0$ and the types for abstractions,
that is, find all the places where the parameter of an abstraction
should be bound.  We conjecture that such a type inference algorithm
for our system will be able to compute the information found in an
interative data flow analysis.

\end{document}